\begin{document}

\title{ Discouraging Pool Block Withholding Attacks in Bitcoins 
}


\author{Zhihuai Chen\and
    Bo Li\and
    Xiaohan Shan \and
    Xiaoming Sun \and
    Jialin Zhang
}

\authorrunning{Chen \and Li \and Shan \and Sun \and Zhang} 

\institute{Zhihuai Chen \and Jialin Zhang \and Xiaoming Sun \at
    University of Chinese Academy of Sciences, 100049, Beijing, China
    \at CAS Key Lab of Network Data Science and Technology, Institute of Computing Technology, Chinese Academy of Sciences (CAS), 100190 Beijing, China\\
    \email{\{chenzhihuai, zhangjialin, sunxiaoming\}@ict.ac.cn}           
    \and
    Bo Li \at
    Department of Computer Science, University of Oxford, OX1 3QD, Oxford, United Kingdom
    \email{boli@cs.ox.ac.uk}           
    \and
    Xiaohan Shan \at
    Department of Computer Science and Technology, Tsinghua University, 100084, Beijing, China\\
    \email{shanxiaohan@tsinghua.edu.cn}           
}

\date{Received: date / Accepted: date}

\maketitle

\begin{abstract}
    The arisen of Bitcoin has led to much enthusiasm for blockchain research and block mining,
    and the extensive existence of mining pools helps its participants (i.e., miners) gain reward more frequently.
    Recently, the mining pools are proved to be vulnerable for several possible attacks, and pool block withholding attack is one of them:
    one strategic pool manager sends some of her miners to other pools and
    these miners pretend to work on the puzzles but actually do nothing.
    And these miners still get reward since the pool manager can not recognize these malicious miners.

    In this work, we revisit the game-theoretic model for pool block withholding attacks
    and propose a revised approach to reallocate the reward to the miners.
    Fortunately, in the new model, the pool managers have strong incentive to not launch such attacks.
    We show that for any number of mining pools, no-pool-attacks is always a Nash equilibrium.
    Moreover, with only two minority mining pools participating, no-pool-attacks is actually the unique Nash equilibrium.

    \keywords{Blockchain \and Mining Pool \and Block Withholding Attack \and Game Theory}
\end{abstract}

\section{Introduction}
Bitcoin is a decentralized crypto currency originally proposed by Satoshi Naka-moto in 2008 \cite{bitcoin2008}. Since then, it has attracted wide attention in the world both academically and commercially because of its soaring price and the characteristics of decentralization. Bitcoin realizes its decentralization by \textit{blockchain}, which is a global  ledger maintained by distributed system. This ledger can record historical transactions and other information.

One main difference between Bitcoin and most existing currencies is the way of currency issuance and the method of bookkeeping.
Bitcoin is implemented on a P2P network and everyone is able to join or leave the network without any permission.
The participants in generating new blocks are called \textit{miners},
and their tasks are 
to verify the legitimacy of undetermined transactions and pack those legal transactions into a block.
In order to validate the block, every miner needs to work on a cryptographic puzzle,
which needs large amount of computational resources (like electricity and hardware).
To motivate the miners to generate new blocks, the first miner who solves the puzzle correctly, will get rewarded.
Roughly speaking, the probability that a miner can propose a block successfully is proportional to her computational power among all miners.
The above protocol of identifying the block generator is called \textit{proof of work} and the process of solving puzzles is called \textit{mining}.
More detailed of Bitcoin system description can be referred to the
Bitcoin white paper \cite{bitcoin2008} and wiki\footnote{\url{https://wikipedia.org/wiki/Bitcoin}}.

To control the frequency of block generation, the difficulty of puzzles is adjusted dynamically in Bitcoin system,
which is about every ten minutes in expectation.
With the development of Bitcoin, the computational power in the system is extremely large nowadays.
Thus, it is very difficult for a solo miner to propose a block successfully.
This means the miner may spend several years to obtain (huge) rewards which is of course unacceptable.
To pursuit stable reward, a lot of miners join in the \textit{mining pools} which are gatherings of individual computational powers \cite{Rosenfeld2011bwa}.
Miners in the same mining pool work together on the proof of work protocol,
thus a pool can obtain reward much more frequently compared to mining individually.
After obtaining reward, the mining pool will distribute the reward among miners in this pool based on the computational power of each miner.
For a solo miner, attending a mining pool cannot increase the expected reward, but the variance is improved significantly.
Nowadays, there are more than 12 Bitcoin mining pools and more than 90\% of Bitcoin mining is done
by those pools\footnote{Bitcoin hashrate distribution. \url{https://blockchain.info/pools}. Accessed June 14, 2019.}.

However, it is shown in \cite{Eyal2015pbwa} that the permissionless mining pools
have strong incentives to launch the {\em pool block withholding attacks (PBW attacks)} on each other:
one strategic pool manager sends some of her miners to other pools
and these miners pretend to work on the puzzles but actually do nothing.
In Bitcoin or any decentralized system, the pool managers are not able to recognize such malicious miners,
thus these miners can still obtain the reward from mining pool proportional to their computing powers.
Eyal \cite{Eyal2015pbwa} proved that although infiltrating the other pools consumes some of her computational power,
the reward from infiltrating miners may still increase the pool's total utility.
PBW attack is also demonstrated in \cite{luu2015power} with a different reward function.

Eyal \cite{Eyal2015pbwa} modeled the above scenario as {\em pool block withholding games (PBW games)}
and showed that with any number of pools, no-pool-attacks is not a Nash equilibrium.
When there are two pools, the situation faced by the two managers is similar to prisoner's dilemma,
which is called {\em miner's dilemma}: in an equilibrium,
both manages launch the PBW attack and accordingly earn less rewards
compared to when they mine honestly.

Alkalay-Houlihan and Shah \cite{AlkalayHoulihan2019aaai} further studied
the miner's dilemma between two strategic mining pools and
obtained the bound of the social loss due to noncooperation, i.e., price of anarchy.
They showed that the pure Nash equilibrium always exists, and the pure price of anarchy is at most 3 in this game.
They also conjectured the tight bound should be 2 and demonstrated this in some special cases.

It is disappointing that launching PBW attacks to be always profitable for the pools, and would dramatically decrease the social welfare.
Such bad news is derived from the inaction of the malicious Bitcoin pools.
Actually, in both \cite{Eyal2015pbwa} and \cite{AlkalayHoulihan2019aaai},
the authors assume that all the rewards are proportionally distributed to the miners.
Accordingly, a natural question to ask would be

\begin{quote}
    \em Is there any other approach to reallocate the rewards such that PBW attacks can be avoided or discouraged.
\end{quote}

\subsection{Main Contributions}


Briefly, we resolve the above question affirmatively by refining the PBW games
and show that for any number of mining pools, no-pool-attacks is always a Nash equilibrium under a proper \textit{way to award the successful miner}.
Moreover, with only two minority pools participating, no-pool-attacks is the unique Nash equilibrium.
We summarize our contributions as follows.
\begin{itemize}
    \item We refine the PBW game by allowing all pool managers to deduct (different) percentages of their rewards
        before they distribute the reward to the miners, where the deducted reward is used to award the successful miner.
        We call the refined model DPBW game.
        We believe such a new reward distribution rule in DPBW games is more realistic than the proportional way,
        as, intuitively, the former provides enough motivation for the miners to work hard on mining.
        Moreover, it should also be careful that the manager cannot deduct too much from the total reward since the more they
        deduct, the less incentives are left to the miners to join the pools.

    \item We prove that for arbitrary number of pools, no-pool-attacks is always a Nash equilibrium of DPBW games
        with reasonable deductions.
        Thus the price of stability (PoS) of DPBW games is 1.
        This is intriguing for the situations in which there is some authority that can influence the managers a bit
        and help them converge to the optimal solution that is also stable.

    \item Particularly, for two minority pools participating, the unique Nash equilibrium of DPBW games with reasonable deductions is no-pool-attacks.
        This result is more exciting for the special case: even without such authorities,
        every manager is automatically willing to not launch PBW attacks.
\end{itemize}

Thus in the present work, we improve the negative results proved in \cite{Eyal2015pbwa} and \cite{AlkalayHoulihan2019aaai}
by designing a new way to allocate the reward to the miners and award the successful miner
such that all the computing power will be used to mining.

%
%
%

\subsection{Related Works}

Recent years have seen a number of studies on security issues of blockchain, such as the selfish mining attack \cite{Eyal2014selfishmining}, the Eclipse attack \cite{Heilman2015129eclipse} and the distributed denial-of-service attack \cite{Johnson2014ddos}.
A closely relevant attack to our study is the {\em Block Withholding Attack (BW attack)} which is first defined by Rosenfeld in \cite{Rosenfeld2011bwa}.
In this attack, a miner who has found a legal block chooses not to submit it to the mining pool immediately
but rather delays to submit or even directly abandons it.
Many follow-up studies focus on the simulation and countermeasures of BW attacks.

To resist BW attacks, Schrijvers et al. \cite{schrijvers2016incentive} designed a new incentive compatible reward function and
showed proportional mining rewards are not incentive compatible.
It was shown in \cite{Bag2016preventBWA} that
by giving extra reward to a miner who actually finds the winning block on behalf of the pool,
it is possible to discourage BW attacks.
Tosh et al. \cite{Tosh2017security} modeled the BW attacks in a blockchain cloud and demonstrates that
attacker's access to extra computational power could disrupt the honest mining operation.
Very recently,
Wu et al. \cite{WU2019Equilibrium} constructed a generalized model where two participants can choose to either cooperate with each other or employ a BW attack.
They showed that increasing the information asymmetry by utilizing information conceal mechanisms could lower the occurrence of BW attacks.
More studies in this line can be found in \cite{Bag2017,Mousavinejad2018detect,Haghighat2019}.

The Pool Block Withholding Attack (PBW attack) is a variation of BW attacks, the difference is that the PBW attack is launched by a pool manager rather than a miner.
Eyal \cite{Eyal2015pbwa} describes PBW attacks from the perspective of game theory in which the players are pool managers and the strategy of each manager is allocating some powers to launch the block withholding attack. Eyal also proves that with any number of pools, all pools mining honestly and not attacking each other is never a Nash equilibrium.
Subsequently, \cite{AlkalayHoulihan2019aaai} follows Eyal's study on the case of two pools attacking each other. They show that this game always admits a pure Nash equilibrium, and its pure price of anarchy is at most 3.

Finally, there are also many other game theoretic studies about Bitcoin and other crypto currencies.
For example,
\cite{babaioff2012bitcoin} studies how to incentivize the users to propagate each transaction in a tree network;
\cite{carlsten2016instability} shows that with only transaction fees (and negligible block rewards),
the miners have strong incentive to fork a block and generate a greater reward;
\cite{lewenberg2015bitcoin} views Bitcoin from a cooperative game theoretic perspective
and show under high transaction loads, it is difficult for pool managers to distribute rewards in a stable way;
\cite{chen2019axiomatic} adopts an axiomatic approach to investigate how the reward should be distributed among the miners.


\section{Model}

In this section, we formally define the {\em Discouraging pool block withholding games (DPBW games)}.
Similar with \cite{Eyal2015pbwa} and \cite{AlkalayHoulihan2019aaai},
it is assumed that each miner exactly joins one pool and is totally operated by that pool's manager.
Thus the players in a DPBW game are the managers of $n$ mining pools, denoted by $N$.
Let  $m_i \in \mathbb{R}^{+}$ be the mining power of manager $i\in N$.
Assume $m$ is the total mining power in the worldwide Bitcoin system and $m \ge \sum_{i \in N} m_i$.
By $m > \sum_{i \in N} m_i$, we mean there are
extra mining power outside of the studied pools $N$, due to any solo miners or inaccessible mining pools.

In a DPBW game, each player $i$ might only allocate $\alpha_{i}$ fraction of the total reward to her miners proportionally to their mining power
and award the left to the successful miner.
She might also use a fraction of its mining power to infiltrate another pool~$j$.
Such mining power does not actually work for pool $j$, but gets a fraction of the total reward from $j$.
It is assumed that the mining powers are continuous and can be arbitrarily divided.
In this work, we study how to select these $\alpha_{i}$'s for the players so that all of them do not want to infiltrate others.
Thus player $i$'s strategy space is all possible ${\bf x}_{i} = (x_{ij})_{j\in N\backslash\{i\}}$ such that $\sum_{j=1}^{n} x_{ij} \le m_{i}$ and $x_{ij} \ge 0$ for all $j\in N\backslash\{i\}$.
Each $x_{ij}$ with $j\neq i$ represents the amount of mining power that $i$ wants to infiltrate pool $j$.
Denote by ${\bf x} = ({\bf x}_{1}, \cdots, {\bf x}_{n})$ a full strategy profile.



To make us focus on the pool block withholding behaviors,
we assume the reward for each block is fixed,
thus the selection of the transactions does not matter.
Given a strategy profile ${\bf x}$, the players' utilities are defined as follows:
Assume each player $i$ first gets a total reward of $r_{i}({\bf x})$ by mining and infiltrating other pools.
Then she deducts $(1-\alpha_{i})$ fraction from $r_{i}({\bf x})$ used for award and obviously
only her honest miners can get this.
The remaining reward $\alpha_{i}r_{i}({\bf x})$ is proportionally allocated to all her miners
including both her honest and the infiltrated ones from the other pools.
Thus the pool's true utility is the total reward allocated to her honest miners.

More precisely, each player $i$'s reward $r_{i}({\bf x})$ consists of two parts:
{\em direct reward} and {\em infiltrating reward}.
Since every player $i$ only uses $m_{i} - \sum_{l \neq i} x_{il}$ mining power for honest mining,
player $i$'s direct reward is proportional to the fraction of the honest mining power contributed by her pool, denoted by
$$DR_{i}({\bf x}) = \frac{m_{i} - \sum_{l \neq i} x_{il}}{m -  \sum_{j=1}^{n} \sum_{l \neq j} x_{jl}};$$
since the pools cannot distinguish infiltrating miners from honest miners, player $i$'s infiltrating reward from every other pool $j$ is proportional to
the fraction of her infiltrating mining power to $j$, denoted by
$$IR_{i}({\bf x}) = \sum_{j\in N\setminus\{i\}} \frac{\alpha_{j} \cdot r_{j}({\bf x}) \cdot x_{i}} {m_{j} + \sum_{l\neq j}x_{lj}}.$$
Note that 
both the direct reward and the infiltrating reward will be allocated to all miners.
Accordingly, player $i$'s total reward is
$$r_{i}({\bf x}) = DR_{i}({\bf x}) + IR_{i}({\bf x}),$$
and her utility is

\begin{equation}\label{eq:utility}
    U_{i}({\bf x}) =  (1 - \alpha_{i} + \frac{m_{i}\alpha_{i}}{m_{i} + \sum_{j\neq i}x_{ji}}) r_{i}({\bf x}).
\end{equation}

For any vector ${\bf v} = (v_{1},\ldots, v_{n})$ and a particular $1 \le i \le n$,
denote by ${\bf v}_{-i}$ the resulting vector of ${\bf v}$ when element $v_{i}$ is omitted.
A strategy profile ${\bf x}$ is called a Nash equilibrium if
for every player $i$ and every possible strategy ${\bf x}'_{i}$,
$$
U_{i}({\bf x}) \ge U_{i}({\bf x}'_{i}, {\bf x}_{-i}).
$$

In a non-cooperative game, the {\em price of anarchy} (PoA) is defined as the ratio
between the optimal social welfare and the worst social welfare of any possible Nash equilibria;
while the {\em price of stability} (PoS) is defined as the ratio
between the optimal social welfare and the best social welfare of any possible Nash equilibria.
In DPBW games, the social welfare is defined as the total mining power that is used to honest mining.

\section{The PoS of DPBW Games is 1}

In this section, we prove that for most reasonable $\alpha_{i}$'s,
no-pool-attacks is always a Nash equilibrium of DPBW games for any number of mining pools, that is, the PoS of DPBW games is always 1. Formally,

\begin{theorem}\label{thm:anypools}
    Strategy profile $({\bf 0}, \ldots, {\bf 0})$ is a Nash equilibrium for any DPBW game
    if  $\forall i,  \alpha_{i} \leq  1-\frac{m_{\max}}{m}$, where $m_{\max}=\max_{i\in N} m_i$.
\end{theorem}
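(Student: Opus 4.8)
The plan is to verify the Nash condition directly: fix an arbitrary player $i$, assume every other player plays $\mathbf{0}$, and show that no deviation $\mathbf{x}'_i = (x_{ij})_{j \ne i}$ with $\sum_{j \ne i} x_{ij} \le m_i$ can raise $i$'s utility above its value at the all-zero profile.

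First I would compute the baseline. At $\mathbf{x} = (\mathbf{0}, \ldots, \mathbf{0})$ we have $DR_i(\mathbf{0}) = m_i/m$, $IR_i(\mathbf{0}) = 0$, hence $r_i(\mathbf{0}) = m_i/m$, and the prefactor in \eqref{eq:utility} is $1 - \alpha_i + \frac{m_i\alpha_i}{m_i} = 1$, so $U_i(\mathbf{0}) = m_i/m$. Next I would simplify the utility of a unilateral deviation. Write $s = \sum_{j \ne i} x_{ij}$. Since $x_{jl} = 0$ for all $j \ne i$, the global honest denominator is $m - s$; every other pool's infiltration reward vanishes, so $r_j = m_j/(m-s)$ for $j \ne i$; and because no one infiltrates pool $i$, we have $\sum_{j\ne i} x_{ji} = 0$, so the prefactor of $U_i$ is again $1$. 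Thus $U_i(\mathbf{x}'_i, \mathbf{0}_{-i}) = r_i(\mathbf{x}'_i, \mathbf{0}_{-i}) = \frac{1}{m-s}\bigl( m_i - s + \sum_{j\ne i} \frac{\alpha_j m_j x_{ij}}{m_j + x_{ij}} \bigr)$.

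The core estimate is to bound the infiltration term. Using $\frac{m_j x_{ij}}{m_j + x_{ij}} \le x_{ij}$ together with the hypothesis $\alpha_j \le 1 - m_{\max}/m$, the sum is at most $(1 - m_{\max}/m)\, s$, which gives $U_i(\mathbf{x}'_i, \mathbf{0}_{-i}) \le \frac{m_i - (m_{\max}/m)\, s}{m-s}$. Comparing with $m_i/m$ (and noting $m - s > 0$ whenever any deviation is available, since then $s \le m_i < m$), the required inequality $\frac{m_i - (m_{\max}/m)\, s}{m-s} \le \frac{m_i}{m}$ cross-multiplies to $m_{\max}\, s \ge m_i\, s$, which holds because $m_{\max} \ge m_i$ and $s \ge 0$. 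This finishes the argument.

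There is no real obstacle here; the delicate points are just bookkeeping. One must be careful that under a single-player deviation the utility expression collapses twice — the other pools contribute no infiltration reward, and $i$'s own ``award refund'' factor becomes $1$ — and then to notice that the single surviving inequality is precisely $m_{\max} \ge m_i$, which is exactly what the threshold $\alpha_i \le 1 - m_{\max}/m$ is engineered to guarantee. The degenerate cases ($n = 1$, or $s$ attaining $m_i$) are immediate and worth a one-line remark.
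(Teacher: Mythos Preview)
Your argument is correct and follows essentially the same route as the paper: reduce $U_i$ to $r_i$ via the prefactor being $1$, compute $r_i$ explicitly under a unilateral deviation, bound the infiltration contribution, and observe that the remaining inequality collapses to $m_{\max}\ge m_i$. The only cosmetic difference is that the paper splits $r_i(\mathbf{x}'_i,\mathbf{x}_{-i})-m_i/m$ into one summand per target pool $j$ and shows each is nonpositive, whereas you aggregate with the single bound $\frac{m_j x_{ij}}{m_j+x_{ij}}\le x_{ij}$; both lead to the identical final comparison.
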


\begin{proof}
    Let ${\bf x}=({\bf x}_{1}, \ldots, {\bf x}_{n})$ be the strategy profile such that $x_{ij} = 0$ for any $i\in N$ and $j\in N\setminus\{i\}$.
    Note that, in Equation~\eqref{eq:utility}, the coefficient before $r_{i}({\bf x})$ is always 0 as $x_{ji} = 0$ for all $j\neq i$.
    Thus, in the rest proof we show $r_i({\bf x'}_i, {\bf x}_{-i})\ge r_i({\bf x})$ instead of $U_i({\bf x'}_i, {\bf x}_{-i})\ge U_i({\bf x}_i)$, where ${\bf x}'_i$ is any feasible strategy of player $i$.

    Arbitrarily fix a player $i$ and it is easy to see $r_i({\bf x}) = \frac{m_i}{m}$.
    To prove the theorem, it suffices to show that for any feasible strategy ${\bf x}_{i}'$ with all $j\neq i$ and $x'_{ij} \ge 0$,
    $$r_i({\bf x}'_{i}, {\bf x}_{-i}) \le \frac{m_i}{m}.$$
    Denote $i$'s total infiltrating mining power under strategy ${\bf x}'_{i}$ by $x_{i}' = \sum_{j\neq i}x'_{ij}$. Then
    \begin{eqnarray*}
        r_i({\bf x}'_{i}, {\bf x}_{-i}) &=& \frac{m_i-x'_i}{m-x'_i}+\sum_{j\in N\setminus\{i\}}\frac{\alpha_jx'_{ij}m_j}{(m-x'_i)(m_j+x'_{ij})}\\
                                        &=& \frac{m_i}{m}\left(\frac{m}{m_i}\cdot\frac{m_i-x'_i}{m-x'_i} +\sum_{j\in N\setminus\{i\}}\frac{m}{m_i}\cdot\frac{\alpha_jx'_{ij}m_j}{(m-x'_i)(m_j+x'_{ij})}\right)\\
                                        &=& \frac{m_i}{m}\left(1+\frac{x'_i}{m_i}\cdot\frac{m_i-m}{m-x'_i} +\sum_{j\in N\setminus\{i\}}\frac{m}{m_i}\cdot\frac{\alpha_jx'_{ij}m_j}{(m-x'_i)(m_j+x'_{ij})}\right)\\
                                        &=& \frac{m_i}{m}\left(1+\sum_{j\in N\setminus\{i\}}\frac{x'_{ij}}{m_i}\cdot\frac{m_i-m}{m-x'_i} +\sum_{j\in N\setminus\{i\}}\frac{m}{m_i}\cdot\frac{\alpha_jx'_{ij}m_j}{(m-x'_i)(m_j+x'_{ij})}\right).
    \end{eqnarray*}
    Now we claim the following inequality:
    \begin{equation}\label{eq:proof_of_thm1_1}
        \frac{x'_{ij}}{m_i}\cdot\frac{m_i-m}{m-x'_i}+
        \frac{m}{m_i}\cdot\frac{\alpha_jx'_{ij}m_j}{(m-x'_i)(m_j+x'_{ij})}\leq 0.
    \end{equation}
    Note that Inequality (\ref{eq:proof_of_thm1_1}) implies $r_i(\textbf{x}'_{i}, \textbf{x}_{-i}) \leq \frac{m_i}{m}=r_i(\textbf{x})$,
    which completes the proof of Theorem~\ref{thm:anypools}.

    The remaining is dedicated to show the correctness of Inequality (\ref{eq:proof_of_thm1_1}).
    By rearranging the terms, the lefthand of Inequality (\ref{eq:proof_of_thm1_1}) equals
    $$
    \frac{x'_{ij}[(m_i-m)(m_j+x'_{ij})+mm_j\alpha_{j}]}{m_i(m-x'_i)(m_j+x'_{ij})}.
    $$
    As the denominator is always positive,
    we only need to consider the sign of its numerator:
    \begin{eqnarray*}
        (m_i-m)(m_j+x'_{ij})+mm_j\alpha_{j} &\leq& (m_i-m)m_j+mm_j\alpha_{j}\\
                                            &\leq&  (m_i-m)m_j+mm_j(1-\frac{m_{max}}{m})\\
                                            &=&m_im_j-m_{max}m_j\leq 0,
    \end{eqnarray*}
    where the first inequality is because $x'_{ij} \ge 0$ and the second is because $\alpha_{i} \leq  1-\frac{m_{max}}{m}$.
\end{proof}

\paragraph{Remark.}
Note that $\frac{m_{\max}}{m}$ fraction of the reward cannot be too large for any dencentralized system,
thus we believe the requirement of $\alpha_{i} \leq  1-\frac{m_{\max}}{m}$ in Theorem~\ref{thm:anypools} is
a reasonable tradeoff between complementing the maintenance of a pool and incentivizing the miners to join it.

\section{The Uniqueness of Nash equilibrium for Two-Pool Case}
In this section, we study a special case of the DPBW game when only two pools are included,
which is exactly the same setting with the previous work \cite{AlkalayHoulihan2019aaai}.
However, as will be proved, in PBWA+ game, it is possible for the pool managers to deduct a small fraction from the reward,
so that no-pool-attacks is a unique Nash equilibrium.

\begin{theorem}
    \label{thm:2pools:unique}
    For PBWA+ Game with two players, by setting $\alpha_1 \le 1-\frac{m_2}{m}$, $\alpha_2 \le 1-\frac{m_1}{m}$ and $m > 3(m_1 + m_2)$,
    the game has a unique Nash equilibrium where both players do not infiltrate the other pool.
\end{theorem}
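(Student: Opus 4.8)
The plan is to prove the much stronger claim that $x_{12}=0$ is a \emph{strictly dominant} strategy for player~$1$ and, by the symmetry $1\leftrightarrow2$ (which preserves all the hypotheses), that $x_{21}=0$ is strictly dominant for player~$2$; existence and uniqueness of the Nash equilibrium then follow at once. Write $x_1:=x_{12}$ and $x_2:=x_{21}$. By Equation~\eqref{eq:utility}, $U_1({\bf x})=\bigl(1-\alpha_1+\tfrac{m_1\alpha_1}{m_1+x_2}\bigr)\,r_1({\bf x})$, where the prefactor is strictly positive and independent of $x_1$; hence player~$1$'s best responses to a fixed $x_2$ are exactly the maximizers of $x_1\mapsto r_1(x_1,x_2)$ on $[0,m_1]$. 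So it suffices to show $r_1(x_1,x_2)\le r_1(0,x_2)=\tfrac{m_1}{m-x_2}$ for all feasible $(x_1,x_2)$, with equality only at $x_1=0$.

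First I would put $r_1$ in closed form. For two pools the defining relations become the linear system $r_1=DR_1+a\,r_2$, $r_2=DR_2+b\,r_1$ with $a=\tfrac{\alpha_2x_1}{m_2+x_1}$, $b=\tfrac{\alpha_1x_2}{m_1+x_2}$, $DR_1=\tfrac{m_1-x_1}{m-x_1-x_2}$, $DR_2=\tfrac{m_2-x_2}{m-x_1-x_2}$; since $a,b\in[0,1)$ this yields $r_1=\tfrac{DR_1+a\,DR_2}{1-ab}$ with $1-ab>0$. A short simplification then gives that $\tfrac{m_1}{m-x_2}-r_1(x_1,x_2)$ equals $\tfrac{x_1}{1-ab}$ times the bracket $\tfrac{m-m_1-x_2}{(m-x_2)(m-x_1-x_2)}-\tfrac{\alpha_2}{m_2+x_1}\bigl(\tfrac{m_2-x_2}{m-x_1-x_2}+\tfrac{m_1b}{m-x_2}\bigr)$. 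As $\tfrac{x_1}{1-ab}\ge0$, the whole theorem reduces to showing this bracket is nonnegative, i.e.\ (clearing denominators) $(m-m_1-x_2)(m_2+x_1)\ge\alpha_2\bigl((m_2-x_2)(m-x_2)+m_1b\,(m-x_1-x_2)\bigr)$.

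Next I would bound the right-hand side using $\alpha_2\le1-\tfrac{m_1}{m}$ directly, $\alpha_1\le1-\tfrac{m_2}{m}$ through $b$, and $m-x_1-x_2\le m-x_2$; the inner factor becomes $m_2-x_2+\tfrac{(m-m_2)m_1x_2}{m(m_1+x_2)}$, which collapses to $m_2(1-p)$ with $p:=\tfrac{x_2(mx_2+m_1m_2)}{m\,m_2(m_1+x_2)}$. Dropping the nonnegative term $(m-m_1-x_2)x_1$ from the left and using the identity $(m-m_1)(m-x_2)=m(m-m_1-x_2)+m_1x_2$, everything rearranges to the single clean requirement $p\ge\tfrac{m_1x_2}{(m-m_1)(m-x_2)}$. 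For $x_2>0$ this is equivalent to $(mx_2+m_1m_2)(m-m_1)(m-x_2)\ge m\,m_1m_2(m_1+x_2)$, which I would verify from $m>3(m_1+m_2)$: the left side is at least $m_1m_2\cdot\bigl(\tfrac{2m}{3}\bigr)^2$ while the right side is at most $m\,m_1m_2\cdot\tfrac{m}{3}$, and $\tfrac49>\tfrac13$. Retaining the discarded term $(m-m_1-x_2)x_1$ (positive when $x_1>0$, since $m-m_1-x_2\ge m-m_1-m_2>0$) makes the inequality strict for $x_1>0$, so $x_1=0$ is the unique best response against every $x_2$; the symmetric statement handles player~$2$, and Theorem~\ref{thm:2pools:unique} follows.

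I expect the crux to be precisely the chain of estimates above: the margin is genuinely tight and the naive relaxations fail. For example, replacing $m_1b=\tfrac{\alpha_1m_1x_2}{m_1+x_2}$ by the simpler upper bound $x_2$ (or even $\alpha_1x_2$) already breaks the inequality and would misleadingly suggest that $x_1=0$ need not be a best response when $x_2>0$; one must carry the exact infiltration term until the identity $(m-m_1)(m-x_2)=m(m-m_1-x_2)+m_1x_2$ isolates $p$, and only then spend the slack coming from $m>3(m_1+m_2)$. A routine but necessary preliminary is to check that every denominator appearing above --- $1-ab$, $m-x_1-x_2$, $m-x_2$, $m_1+x_2$, $m_2+x_1$ --- is strictly positive on the feasible region, which follows from $m>m_1+m_2$ and $\alpha_1,\alpha_2<1$.
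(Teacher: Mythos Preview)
Your argument is correct, and it is genuinely different from---and cleaner than---the paper's own proof. The paper does not attempt to show dominance; instead it writes the deviation gain $f_1(\Delta x_1)=A_1(\Delta x_1)^2+B_1\Delta x_1$ as a concave quadratic in $\Delta x_1$, observes that a best response must fall into one of three cases ($x_1=0$ with $B_1\le0$; $0<x_1<m_1$ with $B_1=0$; $x_1=m_1$ with $B_1\ge0$), and then rules out eight of the nine case-combinations by a sequence of polynomial claims (sign analyses of the coefficients $a_1,b_1,c_1$ of $Q_1(x_1,x_2)=a_1x_1^2+b_1x_1+c_1$, a contradiction argument on $\bar Q_1+\bar Q_2$, etc.). Your route bypasses all of this: you compare $r_1(x_1,x_2)$ directly to $r_1(0,x_2)=\tfrac{m_1}{m-x_2}$, reduce to a single scalar inequality, and verify it with the margin supplied by $m>3(m_1+m_2)$. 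This yields the stronger conclusion that $x_i=0$ is strictly dominant for each player, from which existence and uniqueness of the equilibrium are immediate; the paper only obtains uniqueness among candidate best-response profiles and invokes Theorem~\ref{thm:anypools} separately for existence.

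Two small points worth making explicit in a full write-up. First, when you replace $\alpha_2$ by $1-\tfrac{m_1}{m}$ you are using that the factor it multiplies, $(m_2-x_2)(m-x_2)+m_1b(m-x_1-x_2)$, is nonnegative; this is immediate since each summand is. Second, your final numerical estimate drops the $mx_2$ term in $mx_2+m_1m_2$; you should note that the resulting bound $\tfrac{4}{9}m^2m_1m_2>\tfrac{1}{3}m^2m_1m_2$ is strict, so together with the retained term $(m-m_1-x_2)x_1>0$ you indeed get strict inequality for every $x_1>0$ (including the boundary $x_2=0$, where $p=0$ and the reduced inequality $p\ge\tfrac{m_1x_2}{(m-m_1)(m-x_2)}$ holds with equality). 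Your remark that the cruder bound $m_1b\le x_2$ would fail is exactly right: it collapses the inner factor to $m_2$ and forces $0\ge m_1x_2$, which is why one must keep the precise form of $b$ until the identity $(m-m_1)(m-x_2)=m(m-m_1-x_2)+m_1x_2$ has been used.
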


We believe the requirement of $m > 3(m_1 + m_2)$ in the theorem is reasonable as
the statistic website\footnote{\url{https://btc.com/stats/pool?pool_mode=month}}
shows that the largest two pools have roughly a third of the total computational power.

\subsection{Notations and Proof Ideas of Theorem \ref{thm:2pools:unique}}

Before we prove Theorem \ref{thm:2pools:unique},
we first simplify our notions to ease our representation.
Since there are only two players, we simplify our notions as follows.
Let  $(x_{1}, x_{2})$ be a strategy profile, where $x_i \in [0, m_{i}]$ means how much player $i$ infiltrates player $j = 3-i$.
Thus each player $i$'s direct reward $r_{i}(x_{1}, x_{2})$ is
$$DR_{i}(x_{1}, x_{2})=\frac{m_{i} - x_{i}}{m - x_1 - x_2};$$
her total reward is
\begin{eqnarray*}
    r_{i}(x_{1}, x_{2}) & = &
    R_{i}(x_{1}, x_{2})  + \frac{\alpha_{j} r_{j}(x_1, x_2) x_{i}} {m_{j} + x_{i}} =  \frac{m_{i} - x_{i}}{m - x_1 - x_2} + \frac{\alpha_{j} r_{j}(x_1, x_2) x_{i}} {m_{j} + x_{i}};  
\end{eqnarray*}
and her utility is
$$
U_{i}(x_{1}, x_{2}) =  (1 - \alpha_{i} + \frac{m_{i}\alpha_{i}}{m_{i} + x_{j}}) r_{i}(x_{1}, x_{2}).
$$
Here note that in $U_{i}(x_{1}, x_{2})$, the first coefficient $1 - \alpha_{i} + \frac{m_{i}\alpha_{i}}{m_{i} + x_{j}}$ only depends on $x_{j}$,
thus when we analyze player $i$'s utility gain by unilateral deviation,
without loss of generality, we just ignore this coefficient and only consider her total reward $r_{i}(x_{1}, x_{2})$.

Solving the reward system of equations in $r_{i}(x_{1}, x_{2})$ for $i=1,2$,
we get the closed forms for the reward functions:
\begin{eqnarray*}
& r_{1}(x_{1}, x_{2}) =  \dfrac{(m_1 + x_2) \left[(m_1 - x_1)(m_2 + x_1) + \alpha_2 x_1 (m_2 - x_2) \right]}
{(m - x_1 - x_2) \left[(m_1 + x_2)(m_2 + x_1) - \alpha_1 \alpha_2 x_1 x_2 \right]}
\end{eqnarray*}
and
\begin{eqnarray*}
& r_{2}(x_{1}, x_{2}) = \dfrac{(m_2 + x_1) \left[(m_2 - x_2)(m_1 + x_2) + \alpha_1 x_2 (m_1 - x_1) \right]}
{(m - x_2 - x_1) \left[(m_2 + x_1)(m_1 + x_2) - \alpha_1 \alpha_2 x_1 x_2 \right]}.
\end{eqnarray*}

Let $x_i+ \Delta x_i$ be player $i$'s deviation, where $\Delta x_i \in [-x_i, m_i - x_i]$.
Denote by
$$ f_1(\Delta x_1)  =  r_1(x_1+ \Delta x_1, x_2) - r_1(x_1, x_2)$$
and
$$ f_2(\Delta x_2)  =  r_2(x_1, x_2 + \Delta x_2) - r_2(x_1, x_2)$$
the reward gain by unilateral deviation.
By definition, the necessary and sufficient condition for strategy profile $(x_1, x_2)$ to be a Nash equilibrium is
for any $\Delta x_i \in [-x_i, m_i - x_i]$,
$$
\begin{array}{ll}
    \left\{ \begin{array}{ll}
            f_1(\Delta x_1)  \leq 0 \\
            f_2(\Delta x_2)  \leq 0. \\
    \end{array} \right.
\end{array}
$$
Thus the problem of finding all possible Nash equilibria becomes finding all such $(x_1, x_2)$ strategy profiles.

Since players 1 and 2 are symmetric, in the following, without loss of generality, we often use player 1 for illustration.
The formula of $f_1(\Delta x_1)$ can be simplified as a quadratic function of $\Delta x_1$:
\begin{eqnarray}
&f_1(\Delta x_1) = A_1(x_1, x_2)  (\Delta x_1)^2 + B_1(x_1, x_2)  \Delta x_1 + C_1(x_1, x_2), \nonumber
\end{eqnarray}
where the formulas of $A_1(x_1, x_2)$, $B_1(x_1, x_2)$ and $C_1(x_1, x_2)$ are shown in Table~\ref{table:eq:ABC}.

\begin{table*}[h]
    \begin{center}
        \begin{eqnarray*}
            A_1(x_1, x_2) & = &- (m_1 + x_2) + r_1(x_1, x_2) (m_1 + x_2 - \alpha_1 \alpha_2 x_2) \\
            B_1(x_1, x_2) & = &(m_1 + x_2) \left[m_1 - m_2 - 2x_1 + \alpha_2(m_2 - x_2) \right] \\
                          &&+ \dfrac{(m_1 + m_2) \left[(m_1 - x_1)(m_2 + x_1) + \alpha_2 x_1 (m_2 - x_2)   \right]}{m - x_1 - x_2} \\
                          & & + \dfrac{(m_1 + x_2 - \alpha_1 \alpha_2 x_2) (m_1 + x_2) \left[(m_1 - x_1)(m_2 + x_1) + \alpha_2 x_2 (m_2 - x_2))  \right]}
                          {(m_1 + x_2) (m_2 + x_1) - \alpha_1 \alpha_2 x_1 x_2},\\
            C_1(x_1, x_2) & = & 0.
        \end{eqnarray*}
    \end{center}
    \caption{The formulas of $A_1(x_1, x_2)$, $B_1(x_1, x_2)$ and $C_1(x_1, x_2)$.}
    \label{table:eq:ABC}
\end{table*}%

Note that since $r_1(x_1, x_2) \leq 1$,  $A_1(x_1, x_2) \le 0$.
Thus, for any strategy profile $(x_1, x_2)$, in order to make $f_{1}(\Delta x_1) \le 0$ for any $\Delta x_1 \in [-x_1, m_1 - x_1]$,
there are three possible cases, as shown in Figures \ref{figure:strategy:x1=0}, \ref{figure:strategy:0<x1<m1} and \ref{figure:strategy:x1=m1}:
\begin{description}
    \item[Case 1. ] $x_1 = 0$ and $B_1(x_1, x_2) \le 0$;
    \item[Case 2. ] $0  < x_1 < m_1$ and $B_1(x_1, x_2) = 0$; and
    \item[Case 3. ] $x_1 = m_1$ and $B_{1}(x_1, x_2) \ge 0$.
\end{description}
Symmetrically, we have all the corresponding definitions of $f_2(\Delta x_2)$, $A_2(x_1, x_2)$, $B_2(x_1, x_2)$ and $C_2(x_1, x_2)$ for player~2.

Therefore, combining with players 1's and 2's strategies, we have nine kinds of possible Nash equilibria.
By Theorem~\ref{thm:anypools}, we already know that $(0,0)$ is a Nash equilibrium.
In the following, we will show that $(0,0)$ is actually the only possible Nash equilibrium for among the nine situations.
Thus in any DPBW game with two players, both the PoA and the PoS are 1.

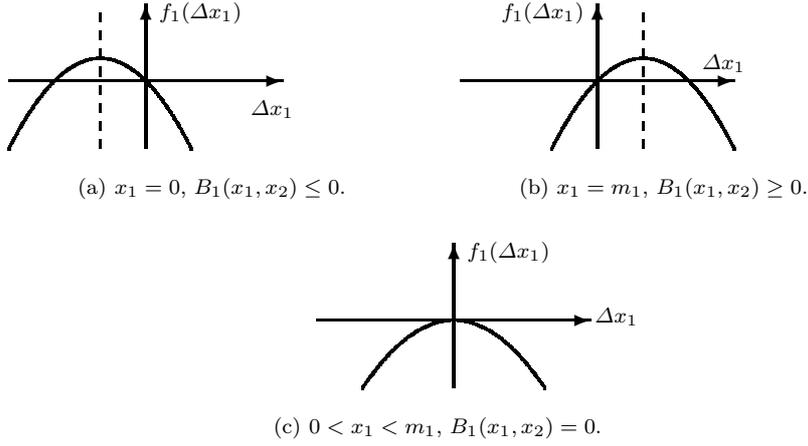
\begin{figure}[ht]
    \begin{subfigure}{0.5\linewidth} 
        \thicklines
        \setlength{\unitlength}{0.6cm}
        \begin{picture}(4.3,3.6)(-3.5,-0.25)
            \put(-3,1.5){\vector(1,0){6}}
            \put(2.3,.75){$\Delta x_1$}
            \put(0,0){\vector(0,1){3.2}}
            \put(1.2,3){\makebox(0,0){$f_1(\Delta x_1)$}}

            \multiput(-1,0)(0, 0.4){8}{\line(0,1){0.2}}

            \qbezier(-1, 2)(0, 2) (1, 0)
            \qbezier(-1, 2)(-2, 2) (-3, 0)
        \end{picture}
        \caption{$x_1 = 0$, $B_1(x_1, x_2) \le 0$.}
        \label{figure:strategy:x1=0}
    \end{subfigure}%
    \begin{subfigure}{0.5\linewidth} 
        \thicklines
        \setlength{\unitlength}{0.6cm}
        \begin{picture}(4.3,3.6)(-3.5,-0.25)
            \put(-3,1.5){\vector(1,0){6}}
            \put(2.3,1.75){$\Delta x_1$}
            \put(0,0){\vector(0,1){3.2}}
            \put(-1.2,3){\makebox(0,0){$f_1(\Delta x_1)$}}

            \qbezier(1, 2)(2, 2) (3, 0)
            \qbezier(1, 2)(0, 2) (-1, 0)
            \multiput(1,0)(0, 0.4){8}{\line(0,1){0.2}}

        \end{picture}
        \caption{$x_1 = m_1$, $B_{1}(x_1, x_2) \ge 0$.}
        \label{figure:strategy:x1=m1}
    \end{subfigure}%
    \newline

    \vspace{0.4cm}
    \begin{subfigure}{1\linewidth} 
        \centering
        \thicklines
        \setlength{\unitlength}{0.6cm}
        \begin{picture}(4.3,3.6)(-2.5,-0.25)
            \put(-3,1.5){\vector(1,0){6}}
            \put(3.1,1.45){$\Delta x_1$}
            \put(0,0){\vector(0,1){3.2}}
            \put(1.2,3){\makebox(0,0){$f_1(\Delta x_1)$}}

            \qbezier(0, 1.5)(1, 1.5) (2, 0)
            \qbezier(0, 1.5)(-1, 1.5) (-2, 0)
        \end{picture}
        \caption{$0  < x_1 < m_1$, $B_1(x_1, x_2) = 0$.}
        \label{figure:strategy:0<x1<m1}
    \end{subfigure}
    \caption{Three possible strategies for player $1$}
    \label{figure:strategy}
\end{figure}

\subsection{Proof of Theorem \ref{thm:2pools:unique}}

In the remaining of this section, we prove Theorem~\ref{thm:2pools:unique} by
showing that all the eight situations, except $(0, 0)$, are not possible
to be Nash equilibria.

\subsection{Case 2 + Case 2: $B_1(x_1, x_2) = B_2(x_1, x_2) = 0$}

To simplify our analysis, we first multiply
$B_1(x_1, x_2)$ by $$\dfrac{(m - x_1 - x_2)\left[(m_1 + m_2)(m_2 + m_1) -\alpha_1 \alpha_2 x_1 x_2) \right]}{m_1 + x_2}$$
to eliminate the denominator, denoting the resulting polynomial by $Q_1(x_1, x_2)$.
By rearranging every monomials, $Q_1(x_1, x_2)$ can be denoted as a quadratic function of $x_{1}$:
$$
Q_1(x_1, x_2) =  a_1(x_2) x_1^2 + b_1(x_2) x_1 + c_1(x_2),
$$
where $a_1(x_2)$, $b_1(x_2)$ and $c_1(x_2)$ are
independent of $x_{1}$ and their formulas are shown in Table~\ref{table:eq:abcsmall}.
Similarly, we also have all the corresponding definitions of $Q_2(x_1, x_2), a_2(x_1), b_2(x_1)$, and $c_2(x_1)$ for player 2.

\begin{table*}[h]
    \begin{center}
        \begin{eqnarray*}
            a_1(x_2)& = &\left(-\alpha_1 \alpha_2^2 m_2+\alpha_1 \alpha_2 m-\alpha_1 \alpha_2 m_1-\alpha_2 m_1+\alpha_1 \alpha_2 m_2+\alpha_2 m_2-m+2 m_1\right)x_2 \\
                    &&+\left(\alpha_1 \alpha_2^2-\alpha_1 \alpha_2-\alpha_2+1\right) x_2^2+\alpha_2 m_2 m_1+m_1^2-m m_1  ,\\
            b_1(x_2) & = & -2m m_1 m_2 + 2 m_1^2 m_2 + (-2 m m_2 + 4 m_1 m_2 - 2 \alpha_1 \alpha_2 m_1 m_2) x_2 + 2 m_2 x_2^2 ,\\
            c_1(x_2) & = & \alpha_2 m_2 x_2^3 +(-m\alpha_2 m_2 + \alpha_2 m_1 m_2 - \alpha_1 \alpha_2 m_1 m_2 + m_2^2 - \alpha_2 m_2^2) x_2^2 \\
                     &&+(-mm_1m_2^2 + m\alpha_2m_1m_2^2 + m_1m_2^2) \\
                     &&+(-m\alpha_2m_1m_2 + m\alpha_1\alpha_2m_1m_2 - m m_2^2 + m\alpha_2m_2^2 + 2 m_1m_2^2 - \alpha_2m_1m_2^2) x_2.
        \end{eqnarray*}
        \caption{The formulas of $a_1(x_1, x_2)$, $b_1(x_1, x_2)$ and $c_1(x_1, x_2)$.}
        \label{table:eq:abcsmall}
    \end{center}
\end{table*}
To show the impossibility of Case 1, it suffices to prove the following lemma.
\begin{lemma}\label{lem:q1=q2=0}
    For any strategy profile $(x_1, x_2)$, $Q_1(x_1, x_2)$ and $Q_2(x_1, x_2)$ cannot be 0 simultaneously.
\end{lemma}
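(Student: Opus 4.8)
The lemma asserts that the two quadratics $Q_1$ (a quadratic in $x_1$ whose coefficients depend on $x_2$) and $Q_2$ (symmetrically, a quadratic in $x_2$) have no common zero in the feasible box $0 \le x_1 \le m_1$, $0 \le x_2 \le m_2$ under the standing hypotheses $\alpha_1 \le 1 - m_2/m$, $\alpha_2 \le 1 - m_1/m$ and $m > 3(m_1+m_2)$. Since $Q_i$ was obtained from $B_i(x_1,x_2)$ only by multiplying through by a factor that is strictly positive on the box, $Q_i = 0$ is equivalent to $B_i(x_1,x_2)=0$; so the lemma is exactly the statement that no profile in which both managers infiltrate strictly (a Case 2 + Case 2 profile) can be a Nash equilibrium. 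The plan is to assume $Q_1(x_1,x_2)=Q_2(x_1,x_2)=0$ for a feasible $(x_1,x_2)$ and reach a contradiction, exploiting throughout that $x_1,x_2,m_1,m_2$ are all tiny next to $m$: in particular $x_1+x_2 \le m_1+m_2 < m/3$, so $m-x_1-x_2 > 2m/3$ and no denominator appearing in the $r_i$'s or in the reductions comes near vanishing. It is also convenient that all hypotheses and the conclusion are invariant under the rescaling $(m,m_1,m_2,x_1,x_2)\mapsto \lambda(m,m_1,m_2,x_1,x_2)$, so one may normalise, say $m_1+m_2=1$, and then gauge the size of each monomial by its degree in $m$.

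First I would read off from Table~\ref{table:eq:abcsmall} the sign information the hypotheses force: the $x_2$-free part of $a_1$ is $m_1(\alpha_2 m_2 + m_1 - m)<0$ because $m>3(m_1+m_2)>m_1+\alpha_2 m_2$; the $x_2$-free part of $b_1$ is $2m_1m_2(m_1-m)<0$; the top coefficient of $a_1$ in $x_2$ equals $(1-\alpha_2)(1-\alpha_1\alpha_2)>0$; and so on, together with the analogous facts for $Q_2$. These confine any common zero to a restricted sub-region of the box. Next comes the elimination step: since $Q_1$ is a quadratic in $x_1$, I would either form the resultant $\mathrm{Res}_{x_1}(Q_1,Q_2)$, a single polynomial $R(x_2)$ whose zero set contains every feasible $x_2$, or — cleaner when it works — choose polynomial multipliers $p(x_1,x_2)$, $q(x_1,x_2)$ so that in $p\,Q_1+q\,Q_2$ the terms carrying the highest power of $m$ cancel, leaving a genuinely lower-order expression. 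In either case the endgame is identical: show the resulting polynomial never vanishes on the pertinent part of the box by grouping its monomials according to their degree in $m$, checking that the $m$-leading group is sign-definite, and bounding each remaining monomial — a product of a handful of $x_1,x_2,m_1,m_2$ — against that group using $x_1+x_2 < m/3$. This is where the constant $3$ in the hypothesis gets consumed.

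The difficulty is concentrated in this last step: the eliminant is a polynomial of fairly large degree with many terms, $Q_1$ and $Q_2$ enjoy only the weak symmetry that exchanging indices $1$ and $2$ swaps them (neither one alone is symmetric), so the algebra cannot be shortcut, and the grouping has to be chosen so that the dominant-in-$m$ part is transparently of one sign while everything else is provably dominated on the box. I would also handle the degenerate possibilities $x_1=0$ and $x_2=0$ separately, since at $x_1=0$ one has $Q_1(0,x_2)=c_1(x_2)$ and one must rule out a root of $c_1$ in $[0,m_2]$; but each such sub-case reduces to the same large-$m$ domination applied to a smaller polynomial. Should the cancellation in $p\,Q_1+q\,Q_2$ fail to be transparent, the brute resultant is always available and can be bounded in the same mechanical way, at the cost of heavier computation.
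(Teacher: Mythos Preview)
Your plan is a legitimate strategy in principle, but it diverges sharply from what the paper does and leaves the genuinely hard step unexecuted. The paper never forms a resultant or tries to bound an eliminant term-by-term in powers of $m$. Instead it exploits the sign claims you also noticed in a much more direct way: since $c_1(x_2)\le 0$ and $b_1(x_2)<0$, the quadratic $x_1\mapsto Q_1(x_1,x_2)$ starts non-positive at $x_1=0$ with negative slope; hence if $a_1(x_2)\le 0$ it never vanishes on $(0,m_1)$, and if $a_1(x_2)>0$ a root in $(0,m_1)$ forces $Q_1(m_1,x_2)>0$. The whole elimination problem is thus replaced by a single \emph{boundary evaluation}: one only needs $Q_1$ at $x_1=m_1$ (and symmetrically $Q_2$ at $x_2=m_2$). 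The paper then decomposes $Q_1(m_1,x_2)$ into three explicit pieces, observes one is manifestly negative, divides the remaining two by a positive factor to get a simple quadratic $\bar Q_1(x_2)$, symmetrises by adding $\bar Q_1(m_2)+\bar Q_2(m_1)$, and reads off that positivity of this sum forces $\ell<3$ where $m=\ell(m_1+m_2)$ --- contradicting $\ell\ge 3$.

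What your route buys is generality (it would apply even without the clean quadratic-in-$x_1$ structure), but at the cost of a large, unstructured polynomial that you yourself flag as the bottleneck and do not bound. The paper's route buys tractability: the parabola-shape argument collapses the two-variable problem to evaluating $Q_i$ at a single endpoint, after which only a handful of explicit terms remain and the symmetrisation makes the dependence on $\ell$ transparent. If you want to complete a proof along your lines you would have to actually carry out the resultant bound; the paper's endpoint trick is the idea that lets one avoid that.
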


Before we prove Lemma \ref{lem:q1=q2=0}, we first prove the following claims (Claim \ref{claim:Q_1:c_1}, Claim \ref{claim:Q_1:b1} and Claim \ref{claim:Q_1:a1<0}).
\begin{claim}
    \label{claim:Q_1:c_1}
    $c_1(x_2) \le 0$ for any $x_2\in [0, m_2]$.
\end{claim}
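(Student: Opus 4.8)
The plan is to regard $c_1(x_2)$ as a polynomial in the single variable $x_2$ and show it is nonpositive throughout $[0,m_2]$. The first move is to strip off an obviously positive factor: every monomial of $c_1(x_2)$ carries at least one power of $m_2$, so $c_1(x_2)=m_2\,\widetilde{c}_1(x_2)$, where $\widetilde{c}_1$ is a cubic in $x_2$ whose leading coefficient is $\alpha_2>0$. Since $m_2>0$, it suffices to prove $\widetilde{c}_1(x_2)\le 0$ on $[0,m_2]$.

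Second, I would pin down the two endpoints, which is where the standing hypotheses $\alpha_1\le 1-\tfrac{m_2}{m}$, $\alpha_2\le 1-\tfrac{m_1}{m}$ and $m>3(m_1+m_2)$ of Theorem~\ref{thm:2pools:unique} enter. At $x_2=0$ the value collapses to $\widetilde{c}_1(0)=m_1m_2\bigl(1-m(1-\alpha_2)\bigr)$, which is $\le 0$ since $\alpha_2\le 1-\tfrac{m_1}{m}$ gives $m(1-\alpha_2)\ge m_1\ge 1$. At $x_2=m_2$ a short calculation gives
$$\widetilde{c}_1(m_2)=m_2\Bigl(m_2(m_2+2m_1-m)+m_1(1-m)+\alpha_1\alpha_2 m_1(m-m_2)\Bigr);$$
the first summand is negative by $m>3(m_1+m_2)$, the second by $m>1$, and the lone positive summand is controlled via $\alpha_1\le 1-\tfrac{m_2}{m}$ and $\alpha_2\le 1$, which yield $\alpha_1\alpha_2 m_1(m-m_2)\le\tfrac{m_1(m-m_2)^2}{m}<m_1(m-m_2)$, an amount that $m>3(m_1+m_2)$ makes dominated by the two negative summands; hence $\widetilde{c}_1(m_2)\le 0$.

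Third, I would exclude the possibility that $\widetilde{c}_1$ pokes above $0$ strictly inside $(0,m_2)$. Since $\widetilde{c}_1'$ is a quadratic with positive leading coefficient $3\alpha_2$, $\widetilde{c}_1$ is increasing, then decreasing, then increasing (or monotone nondecreasing if $\widetilde{c}_1'$ has no real root), so its maximum over $[0,m_2]$ is attained at $x_2=0$, at $x_2=m_2$, or at the unique interior local maximum $t$ (the smaller root of $\widetilde{c}_1'$) when $t\in(0,m_2)$. In that remaining case I would use $\widetilde{c}_1'(t)=0$: polynomial division of $\widetilde{c}_1$ by $\widetilde{c}_1'$ leaves a remainder of degree at most $1$, so $\widetilde{c}_1(t)=\ell(t)$ for an affine function $\ell$, whence $\widetilde{c}_1(t)=\ell(t)\le\max\{\ell(0),\ell(m_2)\}$, and evaluating $\ell$ explicitly (again using the three constraints) shows both values are $\le 0$. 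A tidier alternative for this step is to verify directly that $\widetilde{c}_1'(x_2)\le 0$ on all of $[0,m_2]$; as $\widetilde{c}_1'$ is an upward parabola this amounts to two sign checks, at $x_2=0$ and $x_2=m_2$, and it immediately yields $\widetilde{c}_1(x_2)\le\widetilde{c}_1(0)\le 0$.

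I expect the interior step to be the crux. The endpoint bounds are quick, but ruling out a positive bump of $\widetilde{c}_1$ between $0$ and $m_2$ forces one to confront the small cubic term $\alpha_2 x_2^3$ and, more importantly, the terms carrying the product $\alpha_1\alpha_2$, which are the only positive contributions to $\widetilde{c}_1$ in the interior. The delicate point -- and the only place the slack in $m>3(m_1+m_2)$ is really needed -- is to show these $\alpha_1\alpha_2$-terms are dominated by the negative ones uniformly over $x_2\in[0,m_2]$; I would handle this by using $\alpha_2\le 1-\tfrac{m_1}{m}$ (equivalently $m(1-\alpha_2)\ge m_1$) together with $\alpha_1\le 1-\tfrac{m_2}{m}$ to trade each offending term against a comfortably large negative one, which is routine once organized but is where essentially all the case work lives.
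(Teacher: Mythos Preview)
Your overall plan---check the two endpoint values and then rule out an interior positive bump via the shape of the cubic---is exactly the route the paper takes. But two of your concrete steps do not go through as written.

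First, the formulas. You obtained $\widetilde c_1(0)=m_1m_2\bigl(1-m(1-\alpha_2)\bigr)$ and then invoked ``$m_1\ge 1$'' (and later ``$m>1$''); no such hypotheses exist. The culprit is a typo in Table~\ref{table:eq:abcsmall}: the constant monomial of $c_1$ should be $m_1^2m_2^2$, not $m_1m_2^2$ (compare the paper's own evaluation $c_1(0)=(\alpha_2-1)mm_1m_2^2+m_1^2m_2^2$). With the correction, $\widetilde c_1(0)=m_1m_2\bigl(m_1-m(1-\alpha_2)\bigr)\le 0$ follows directly from $\alpha_2\le 1-m_1/m$, and in your $\widetilde c_1(m_2)$ the term $m_1(1-m)$ becomes $m_1(m_1-m)$. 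Second, and more substantively, your domination argument at $x_2=m_2$ fails even with the corrected expression $\widetilde c_1(m_2)=m_2\bigl[(m_1+m_2)^2-m(m_1+m_2)+\alpha_1\alpha_2 m_1(m-m_2)\bigr]$: bounding the positive term by $m_1(m-m_2)$ leaves you needing $mm_2\ge m_1^2+m_1m_2+m_2^2$, which under $m>3(m_1+m_2)$ would require $2m_2(m_1+m_2)\ge m_1^2$, false when $m_1\gg m_2$ (try $m_1=10$, $m_2=1$). The paper instead groups the bracket as $m_1^2+m_2^2-mm_2+(\alpha_1\alpha_2-1)mm_1+(2-\alpha_1\alpha_2)m_1m_2$ and uses $\alpha_1\alpha_2-1<\alpha_2-1\le -m_1/m$ to cancel the $m_1^2$; then $m\ge 2(m_1+m_2)$ suffices. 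Finally, your ``tidier alternative'' of showing $\widetilde c_1'\le 0$ on all of $[0,m_2]$ need not hold (the local minimum of the cubic may lie inside the interval); the paper's interior step is lighter: since $c_1'$ is an upward parabola, it checks only $c_1'(0)<0$, which forces the local \emph{maximum} of $c_1$ to lie at a negative argument, so on $[0,m_2]$ the maximum is at an endpoint.
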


\begin{proof}
    To prove Claim \ref{claim:Q_1:c_1}, We first show the following facts:
    \begin{itemize}
        \item $c_1(0) \le 0$;
        \item $c_1(m_2) <  0$;
        \item $c_1(x_2)$ achieves its local maximal value at some $y<0$.
    \end{itemize}
    If the above three facts setting up, the curve of $c_1(x_2)$ can be shown in Figure \ref{figure:c1},
    and $c_1(y) < 0$ for any $y\in [0, m_2]$ (the red part),

    Combing the three facts, the curve of $c_1(x_2)$ can be shown in Figure \ref{figure:c1},
    and $c_1(x_2) < 0$ for any $x_2\in [0, m_2]$ (the red part),
    which completes the proof of the claim.

    \begin{figure}[htbp]
        \begin{center}
            \setlength{\unitlength}{0.6cm}
            \thicklines
            \setlength{\unitlength}{0.6cm}
            \begin{picture}(4.3,3.6)(-2.5,-0.25)
                \put(-3,1.5){\vector(1,0){6}}
                \put(3.1,1.45){$x_2$}
                \put(0,0){\vector(0,1){3.2}}
                \put(-0.8,3.3){\makebox(0,0){$c_1(y)$}}

                {\color{red}\qbezier(0, 0.7)(0.9, -1) (2, 1.5)}
                \qbezier(2, 1.5)(2.15,1.8) (2.5, 2.8)
                \qbezier(0, 0.7)(-1.5, 4) (-2.8, 0.2)

                \multiput(-1.35,0)(0, 0.4){8}{\line(0,1){0.2}}

                \multiput(0.6,0)(0, 0.4){8}{\line(0,1){0.2}}

            \end{picture}
            \caption{$c_1(x_2)$.}
            \label{figure:c1}
        \end{center}
    \end{figure}

    Now we prove these facts one by one.

    To see the first fact, observe that when $\alpha_2 \le 1- \frac{m_1}{m}$,
    \[
        c_1(0) = (\alpha_2 - 1) m m_1 m_2^2 + m_1^2 m_2^2  \le -\frac{m_1}{m} m m_1 m_2^2 + m_1^2 m_2^2 \le 0.
    \]
    To see the second fact, we note that
    \begin{eqnarray*}
        \frac{c_1(m_2)}{m_2^2} & = & m_1^2 -m m_2 +m_2^2 -m m_1 + \alpha_1 \alpha_2 m m_1 - \alpha_1 \alpha_2 m_1 m_2 + 2 m_1 m_2\\
                               & = & m_1^2 + m_2^2 - m m_2 + (\alpha_1 \alpha_2 - 1)mm_1   + (2 - \alpha_1 \alpha_2)m_1m_2\\
                               & < & m_1^2 + m_2^2 - m m_2 + (\alpha_2 - 1) m m_1 + 2 m_1 m_2 \\
                               & \le & m_1^2 + m_2^2 - 2 m_1 m_2 - m_2^2 - m_1^2 + 2 m_1 m_2 \\
                               &=& 0,
    \end{eqnarray*}
    where the last inequality is because $m \ge 2(m_1 + m_2)$ and $\alpha_2 - 1 \le - \frac{m_1}{m}$.

    To prove the third fact, we first compute the derivative of $c_1(x_2)$,
    \begin{eqnarray*}
        c'_1(x_2) & = & (3 \alpha_2 m_2) x_2^2 - 2( \alpha_2 m m_2 - \alpha_2 m_1 m_2 + \alpha_1 \alpha_2 m_1 m_2 - m_2^2 + \alpha_2 m_2^2) x_2 \\
                  && -\alpha_2 m m_1 m_2 + \alpha_1 \alpha_2 m_1 m_2 - m m_2^2  + \alpha_2 m m_2^2 + 2 m_1 m_2^2 - \alpha_2 m_1 m_2^2,
    \end{eqnarray*}
    which is a quadratic function of $x_2$.
    Combing the previous two facts and the property that $3 \alpha_2 m_2 > 0$, to show the third fact, it suffices to show the ``constant term'' of $c'_1(x_2)$ is smaller than 0.
    This sufficient condition can be proved by the following derivation.

    \begin{eqnarray*}
 & & - \alpha_2 m m_1 m_2 + \alpha_1 \alpha_2 m_1 m_2 - m m_2^2  + \alpha_2 m m_2^2 + 2 m_1 m_2^2 - \alpha_2 m_1 m_2^2\\
 && =  - \alpha_2 m m_1 m_2 (1 - \alpha_1) - m m_2^2 (1 - \alpha_2) + m_1 m_2^2 (2 - \alpha_2) \\
 && \le  -\alpha_2 m_1 m_2^2 - m_1 m_2^2 + m_1 m_2^2 (2 - \alpha_2) \\
 && =  m_1 m_2 (-\alpha_2 - 1 + 2 - \alpha_2) = m_1 m_2 (1 - 2 \alpha_2) < 0.
    \end{eqnarray*}
    Again, the first inequality is because $1 - \alpha_1 \ge \frac{m_2}{m}$ and $1 - \alpha_2 \ge \frac{m_1}{m}$;
    and the second is because $\alpha_i > \frac{1}{2}$.
\end{proof}
\begin{claim}
    \label{claim:Q_1:b1}
    $b_1(x_2) < 0$ for any $x_2 \in [0, m_2]$.
\end{claim}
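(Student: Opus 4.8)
The plan is to read $b_1(x_2)$ as a quadratic in the single variable $x_2$ and observe that it is convex, so its maximum over the compact interval $[0,m_2]$ is attained at an endpoint; it then suffices to verify the two endpoint inequalities $b_1(0)<0$ and $b_1(m_2)<0$.

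First I would pull the strictly positive common factor $2m_2$ out of the formula for $b_1$ in Table~\ref{table:eq:abcsmall}, writing $b_1(x_2) = 2m_2\,g(x_2)$ with
$g(x_2) = x_2^2 + (2m_1 - m - \alpha_1\alpha_2 m_1)\,x_2 + (m_1^2 - m m_1)$.
Since $2m_2>0$ and the leading coefficient of $g$ is $1>0$, the sign of $b_1$ on $[0,m_2]$ coincides with the sign of the convex parabola $g$ there, and convexity reduces everything to checking $g(0)<0$ and $g(m_2)<0$.

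The left endpoint is immediate: $g(0) = m_1^2 - m m_1 = m_1(m_1 - m)<0$, because $m_1>0$ and $m > 3(m_1+m_2) > m_1$. For the right endpoint I would expand and regroup to obtain $g(m_2) = (m_1+m_2)^2 - m(m_1+m_2) - \alpha_1\alpha_2 m_1 m_2 = (m_1+m_2)\bigl(m_1+m_2-m\bigr) - \alpha_1\alpha_2 m_1 m_2$. The first term is strictly negative since $m > 3(m_1+m_2)$ forces $m_1+m_2-m<0$, and the second term is nonpositive since $\alpha_1,\alpha_2>0$; hence $g(m_2)<0$. Combining the two endpoint bounds with convexity gives $g(x_2)<0$, so $b_1(x_2) = 2m_2\,g(x_2) < 0$ on all of $[0,m_2]$.

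I do not expect a real obstacle here; this is essentially a routine endpoint check. The two things to be careful about are (i) that invoking the ``maximum at an endpoint'' principle genuinely requires the leading coefficient $2m_2$ to be strictly positive, which holds since $m_2>0$, and (ii) maintaining strict inequalities — which is why I rely on $m > 3(m_1+m_2)$, although in fact only $m > m_1+m_2$ is used for this particular claim, so the hypothesis of Theorem~\ref{thm:2pools:unique} leaves plenty of room. The one silent assumption is $\alpha_1,\alpha_2>0$ (consistent with the $\alpha_i>\tfrac12$ convention already used in the proof of Claim~\ref{claim:Q_1:c_1}), needed only to conclude $-\alpha_1\alpha_2 m_1 m_2 \le 0$.
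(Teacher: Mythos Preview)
Your argument is correct. Factoring out $2m_2$, using convexity of the resulting quadratic $g$, and checking the two endpoints $g(0)=m_1(m_1-m)<0$ and $g(m_2)=(m_1+m_2)(m_1+m_2-m)-\alpha_1\alpha_2 m_1 m_2<0$ is a clean way to settle the claim, and as you note it actually only needs $m>m_1+m_2$.

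The paper proceeds differently: rather than a convexity/endpoint argument, it bounds $b_1(x_2)$ term by term in one line, using $-2mm_1m_2<-2m_1^2m_2$, dropping the nonpositive $-2\alpha_1\alpha_2 m_1 m_2 x_2$, and replacing $-2mm_2 x_2$ by $-4(m_1+x_2)m_2 x_2$ via $m\ge 2(m_1+x_2)$; the surviving terms then telescope to $-2m_2x_2^2\le 0$. Your approach is a bit more structural and yields a sharper hypothesis, while the paper's is a quicker ad hoc estimate that leans on the stronger assumption $m\ge 3(m_1+m_2)$. Either is perfectly adequate for this auxiliary claim.
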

\begin{proof}
    As $m \geq 3(m_1 + m_2)$, the claim easily follows by the following inequalities.
    \begin{eqnarray*}
        b_1(x_2)
& = & -2m m_1 m_2 + 2 m_1^2 m_2  + 2 m_2 x_2^2 + (-2 m m_2 + 4 m_1 m_2 - 2 \alpha_1 \alpha_2 m_1 m_2) x_2 \\
&<&  -2m_1^2 m_2 + 2 m_1^2 m_2  + 2 m_2 x_2^2  + (-2 (2(m_1+x_2)) m_2 + 4 m_1 m_2 ) x_2 < 0.
    \end{eqnarray*}
\end{proof}


\begin{claim}
    \label{claim:Q_1:a1<0}
    (1) Given any $0 < x_2 < m_2$ such that $a_1(x_2) \le 0$, then $Q_1(x_1, x_2) \neq 0$ for all $0 < x_1 < m_1$;
    (2) given any $0 < x_1 < m_1$ such that $a_2(x_1) \le 0$, then $Q_2(x_1, x_2) \neq 0$ for all $0 < x_2 < m_2$.
\end{claim}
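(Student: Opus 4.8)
The plan is to read off the conclusion directly from the sign information on the coefficients of $Q_1$ that was already established in Claim~\ref{claim:Q_1:c_1} and Claim~\ref{claim:Q_1:b1}. Recall that $Q_1(x_1, x_2) = a_1(x_2) x_1^2 + b_1(x_2) x_1 + c_1(x_2)$ viewed as a quadratic in $x_1$, that Claim~\ref{claim:Q_1:c_1} gives $c_1(x_2) \le 0$, and that Claim~\ref{claim:Q_1:b1} gives $b_1(x_2) < 0$, both for every $x_2 \in [0, m_2]$.

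For part~(1), I would fix any $0 < x_2 < m_2$ with $a_1(x_2) \le 0$ and any $0 < x_1 < m_1$, and then bound the three terms separately: the leading term satisfies $a_1(x_2) x_1^2 \le 0$ since $a_1(x_2) \le 0$ by hypothesis and $x_1^2 > 0$; the linear term satisfies $b_1(x_2) x_1 < 0$ since $b_1(x_2) < 0$ by Claim~\ref{claim:Q_1:b1} and $x_1 > 0$; and the constant term satisfies $c_1(x_2) \le 0$ by Claim~\ref{claim:Q_1:c_1}. Summing the three gives $Q_1(x_1, x_2) < 0$, hence in particular $Q_1(x_1, x_2) \neq 0$. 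Part~(2) then follows after interchanging the indices $1$ and $2$ and invoking the symmetric versions of the two cited claims, namely $b_2(x_1) < 0$ and $c_2(x_1) \le 0$ on $[0, m_1]$.

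I do not expect a genuinely hard step here: the potentially laborious polynomial estimates are precisely the contents of Claim~\ref{claim:Q_1:c_1} and Claim~\ref{claim:Q_1:b1}, which are already in hand, so this claim is essentially a one-line bookkeeping argument on top of them. The only subtlety worth attention is strictness — the statement demands $Q_1 \neq 0$, not merely $Q_1 \le 0$ — so it matters that Claim~\ref{claim:Q_1:b1} provides a \emph{strict} inequality and that the hypothesis $0 < x_1 < m_1$ makes $x_1$ \emph{strictly} positive; together these force the sum to be strictly negative even when $a_1(x_2)$ or $c_1(x_2)$ happens to vanish. One should also note that the range $[0,m_2]$ (resp.\ $[0,m_1]$) in the cited claims covers the open interval $(0,m_2)$ (resp.\ $(0,m_1)$) used here, so the hypotheses apply without issue.
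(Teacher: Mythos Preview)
Your argument is correct and uses exactly the same inputs as the paper --- Claims~\ref{claim:Q_1:c_1} and~\ref{claim:Q_1:b1} --- but packages them more directly. The paper splits into the subcases $a_1(x_2)=0$ and $a_1(x_2)<0$, and in the latter computes the vertex abscissa $-b_1(x_2)/(2a_1(x_2))<0$ to argue that the downward-opening parabola is already past its maximum at $x_1=0$ and hence strictly negative on $(0,m_1)$; in the degenerate linear case it reads the sign off the slope and intercept. Your term-by-term estimate $a_1(x_2)x_1^2 \le 0$, $b_1(x_2)x_1<0$, $c_1(x_2)\le 0$ subsumes both subcases in one line and makes the strictness transparent. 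The paper's geometric phrasing buys a picture (Figure~\ref{figure:a1<=0}) that foreshadows the contrasting case $a_1>0$ needed later in the proof of Lemma~\ref{lem:q1=q2=0}, but for the claim itself your route is cleaner.
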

\begin{proof}
    Since players 1 and 2 are symmetric, we only prove for player 1.
    If $a_1(x_2)=0$, then the curve of $Q(x_1, x_2)$ (a quadratic function of $x_1$) will degenerate to a line (see Figure~\ref{figure:a1=0}). Since its slope $b_1(x_2)$ is negative and y-intercept is non-positive, $Q_1(x_1, x_2)$ cannot be $0$ for any $x_1>0$.

    \begin{figure}[ht]
        \begin{subfigure}{0.5\linewidth}
            \setlength{\unitlength}{0.6cm}
            \thicklines
            \setlength{\unitlength}{0.6cm}
            \begin{picture}(4.3,3.6)(-3.5,-0.25)
                \put(-3,1.5){\vector(1,0){6}}
                \put(3.1,1.45){$x_1$}
                \put(0,0){\vector(0,1){3.2}}
                \put(1.5,3){\makebox(0,0){$Q_1(x_1, x_2)$}}

                \put(0,1.3){\line(-2,1){2}}
                {\color{red}\put(0,1.3){\line(2,-1){2}}}
            \end{picture}
            \caption{$a_1(x_2)=0$}
            \label{figure:a1=0}
        \end{subfigure}%
        \begin{subfigure}{0.5\linewidth}
            \setlength{\unitlength}{0.6cm}
            \thicklines
            \setlength{\unitlength}{0.6cm}
            \begin{picture}(4.3,3.6)(-3.5,-0.25)
                \put(-3,1.5){\vector(1,0){6}}
                \put(3.1,1.45){$x_1$}
                \put(0,0){\vector(0,1){3.2}}
                \put(1.5,3){\makebox(0,0){$Q_1(x_1, x_2)$}}

                \qbezier(-2.8, 0.5)(-1.2, 3.8) (0, 1.15)
                {\color{red} \qbezier(0,1.15)(0.18, 0.75) (0.4, 0.3)}

            \end{picture}
            \caption{$a_1(x_2)<0$}
            \label{figure:a1<0}
        \end{subfigure}%
        \caption{$Q_1$: The case when $a_1(x_{2}) \le 0$.}
        \label{figure:a1<=0}
    \end{figure}
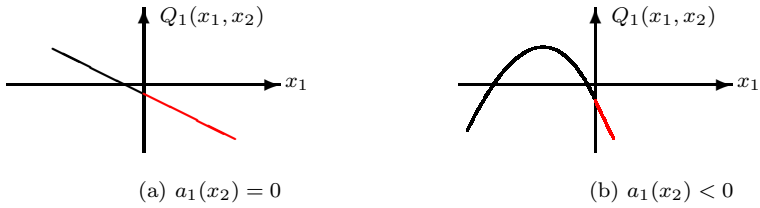
    As $b_1(x_2) < 0$ and $a_1(x_2) < 0$,
    $$-\dfrac{b_1(x_2)}{2a_1(x_2)} < 0.$$
    Combing with the result $c_1(x_2)\leq 0$ for any $x_2\in[0,m_2]$ in Claim \ref{claim:Q_1:c_1}, the curve of $Q_1$ (a quadratic function of $x_{1}$) can only be the case shown in Figure~\ref{figure:a1<0}.
    Accordingly, $Q_1(x_1, x_2) < 0$ for all $0 < x_1 < m_1$ (the red part).
\end{proof}

Having the above three Claims, we are ready to show the proof of Lemma~\ref{lem:q1=q2=0}.

\subsubsection{Proof of Lemma \ref{lem:q1=q2=0}.}
We prove by contradiction.
If there is a strategy profile $(x_1, x_2)$, $0 < x_1 < m_1$ and $0 < x_2 < m_2$, such that $Q_1(x_1, x_2) = Q_2(x_1, x_2) = 0$,
by Claim \ref{claim:Q_1:a1<0}, it must be the case that (shown in Figure \ref{figure:a1>0})
\begin{enumerate}
    \item $a_1(x_2) > 0$ and $Q_1(m_1, x_2) > 0$; and
    \item $a_2(x_1) > 0$ and $Q_2(x_1, m_2) > 0$.
\end{enumerate}

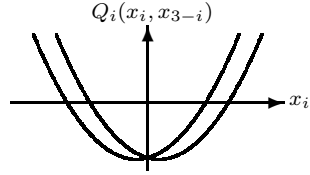
\begin{figure}[htbp]
    \begin{center}
        \setlength{\unitlength}{0.6cm}
        \thicklines
        \setlength{\unitlength}{0.6cm}
        \begin{picture}(4.3,3.6)(-2.5,-0.25)
            \put(-3,1.5){\vector(1,0){6}}
            \put(3.1,1.45){$x_i$}
            \put(0,0){\vector(0,1){3.2}}
            \put(0.1,3.5){\makebox(0,0){$Q_i(x_i, x_{3-i})$}}

            \qbezier(-2, 3)(0.25, -2.5) (2.5, 3)
            \qbezier(2, 3)(-0.25, -2.5) (-2.5, 3)

        \end{picture}
        \caption{$Q_i$: if $a_i(x_{3-i}) \ge 0$.}
        \label{figure:a1>0}
    \end{center}
\end{figure}

Note that
\begin{eqnarray}
    &&Q_1(m_1, x_2)  = \nonumber\\
    && -\alpha_2 m_1 (m - m_1 - x_2) (m_2 - x_2)(m_1 + x_2 - \alpha_1 \alpha_2 x_2) \label{eq:q1:1}   \\
    && + (-m_1 - m_2 + \alpha_2 (m_2 - x_2)) (m - m_1 - x_2)  \nonumber \\
    &&  ~~~\cdot ( - \alpha_1 \alpha_2 m_1 x_2 + (m_1 + m_2) (m_1 + x_2)) \label{eq:q1:2}  \\
    && + \alpha_2 m_1 (m_2 - x_2) (-\alpha_1 \alpha_2 m_1 x_2 + (m_1 + m_2) (m_1 + x_2)). \label{eq:q1:3}
\end{eqnarray}
Let $\bar{Q}_1 (x_2) = \frac{(\ref{eq:q1:1}) + (\ref{eq:q1:3})}{\alpha_2m_1(m_2-x_2)}$, then $\bar{Q}_1 (x_2) > 0$ since (\ref{eq:q1:2}) is always negative and $\alpha_2m_1(m_2-x_2)$ is positive.
Thus,
\begin{eqnarray*}
    \bar{Q}_1(x_2) & = & 2m_1^2 + x_2 (-m + m_2 + \alpha_1 \alpha_2 (m - x_2) + x_2) \\
                   && + m_1 (-m + m_2 + (3 - 2 \alpha_1 \alpha_2) x_2).
\end{eqnarray*}
We assume $m=\ell(m_1 + m_2)$ for some $\ell\geq 3$, thus
\begin{eqnarray*}
    \bar{Q}_1(m_2) & = & (2 - \alpha) m_1^2 + (2- \alpha - \alpha_1 \alpha_2 + \alpha \alpha_1 \alpha_2) m_2^2 \\
                   && + (4 - 2 \ell + 2 \alpha_1 \alpha_2(-2 + \ell)) m_1 m_2 > 0.
\end{eqnarray*}
By the symmetricity of $\bar{Q}_1(m_2)$ and $\bar{Q}_2(m_1)$,  we have
\[
    \bar{Q}_1(m_2) + \bar{Q}_2(m_1)  =  (4 - 2 \ell - \alpha_1 \alpha_2 + \ell \alpha_1 \alpha_2) (m_1 + m_2)^2 - 2 \alpha_1 \alpha_2 m_1 m_2 > 0,
\]
which implies
$$
\alpha < \dfrac{4 - \alpha_1 \alpha_2}{2 - \alpha_1 \alpha_2} \le 3.
$$
Thus we get a contradiction with $m \geq 3(m_1 + m_2)$,
which completes the proof of Lemma \ref{lem:q1=q2=0} and accordingly Case 1 + Case 1 cannot be a Nash equilibrium.

\subsection{Case 2 + Case 1/3: $B_1(x_1, x_2) = 0$ and $x_2\in\{0, m_2\}$}
We follow the notation $Q_1(\cdot)$ in the prior subsection,
\[
    Q_1(x_1, x_2) = a_1(x_2)x_1^2+b_1(x_2)x_1+c_1(x_2)
\]
and $c_1(x_{2}) \le 0$ has been shown in Claim~\ref{claim:Q_1:c_1}.
\begin{lemma} \label{lem:case2+case13}
    $Q_1(x_1, x_2)$ cannot be $0$ for any $x_1\in (0, m_1)$ and $x_2\in\{0, m_2\}$.
\end{lemma}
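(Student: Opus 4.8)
The plan is to combine the three auxiliary claims already established with an elementary sign analysis of the quadratic $Q_1(\cdot, x_2)$ at the two endpoint values $x_2 \in \{0, m_2\}$. Fix such an $x_2$ and regard $Q_1(x_1, x_2) = a_1(x_2)x_1^2 + b_1(x_2)x_1 + c_1(x_2)$ as a quadratic in $x_1$. By Claim~\ref{claim:Q_1:c_1} and Claim~\ref{claim:Q_1:b1} we have $c_1(x_2) \le 0$ and $b_1(x_2) < 0$ --- both were in fact proved for every $x_2 \in [0, m_2]$, so they hold at the endpoints. The argument in the proof of Claim~\ref{claim:Q_1:a1<0} invokes only these two facts (never that $0 < x_2 < m_2$), so it carries over verbatim: if $a_1(x_2) \le 0$, then $Q_1(x_1, x_2) < 0$, hence $Q_1(x_1,x_2)\neq 0$, for all $x_1 \in (0, m_1)$. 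Thus the only case that remains is $a_1(x_2) > 0$.

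When $a_1(x_2) > 0$, the parabola $Q_1(\cdot, x_2)$ is convex; since $Q_1(0, x_2) = c_1(x_2) \le 0$ and its derivative at $x_1 = 0$ is $b_1(x_2) < 0$, it has two real roots $\rho^- \le 0 < \rho^+$ and is strictly negative on $(\rho^-, \rho^+)$. Consequently it suffices to prove $Q_1(m_1, x_2) < 0$: this forces $0 < m_1 < \rho^+$, so that $(0, m_1) \subseteq (\rho^-, \rho^+)$ and $Q_1(\cdot,x_2)$ has no zero there. For this last step I would reuse the explicit decomposition $Q_1(m_1, x_2) = \eqref{eq:q1:1} + \eqref{eq:q1:2} + \eqref{eq:q1:3}$ from the preceding subsection and substitute the two endpoint values.

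For $x_2 = m_2$ the factor $(m_2 - x_2)$ annihilates $\eqref{eq:q1:1}$ and $\eqref{eq:q1:3}$, leaving $Q_1(m_1, m_2) = -(m_1 + m_2)(m - m_1 - m_2)\big[(m_1+m_2)^2 - \alpha_1\alpha_2 m_1 m_2\big]$, which is negative because $m > m_1 + m_2$ and $(m_1+m_2)^2 > m_1 m_2 \ge \alpha_1\alpha_2 m_1 m_2$. For $x_2 = 0$, substituting and factoring out $m_1$ reduces $Q_1(m_1, 0)$ to $m_1\big[-\alpha_2 m_1 m_2 (m - 2m_1 - m_2) + (m - m_1)(m_1 + m_2)(\alpha_2 m_2 - m_1 - m_2)\big]$; here $m - 2m_1 - m_2 > m_1 + 2m_2 > 0$ by $m > 3(m_1 + m_2)$, and $\alpha_2 m_2 - m_1 - m_2 < -m_1 < 0$ since $\alpha_2 < 1$, so both bracketed summands are negative and $Q_1(m_1, 0) < 0$. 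This settles all cases.

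The main obstacle is purely bookkeeping rather than conceptual: carefully specializing and simplifying the cumbersome expression $\eqref{eq:q1:1} + \eqref{eq:q1:2} + \eqref{eq:q1:3}$ at $x_2 = 0$ (the case $x_2 = m_2$ collapses at once), together with a small amount of care in the degenerate sub-case $c_1(x_2) = 0$, where $x_1 = 0$ is itself a root; but since $b_1(x_2) < 0$ the parabola is still decreasing at the origin and its other root is positive, so the conclusion $Q_1(x_1,x_2)\neq 0$ on $(0, m_1)$ is unaffected once $Q_1(m_1, x_2) < 0$ is in hand. No new idea beyond the convexity-and-sign argument seems to be required.
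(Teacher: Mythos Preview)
Your proposal is correct and follows essentially the same route as the paper: invoke Claims~\ref{claim:Q_1:c_1} and~\ref{claim:Q_1:b1}, dispose of the case $a_1(x_2)\le 0$ via the argument of Claim~\ref{claim:Q_1:a1<0}, and in the convex case verify $Q_1(m_1,x_2)<0$ at the two endpoint values of $x_2$. Your expression for $Q_1(m_1,0)$ in fact retains the term~\eqref{eq:q1:2} that the paper's stated formula $\alpha_2 m_1^2 m_2(-m+2m_1+m_2)$ appears to drop; since that term is negative this only strengthens the inequality, so both arguments go through.
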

\begin{proof}
    As shown in Claim~\ref{claim:Q_1:c_1} and Claim~\ref{claim:Q_1:b1}, we have $c_1(x_2) \le 0$ and $b_1(x_2) < 0$ for any $x_2\in\{0, m_2\}$.
    Similar to figure~\ref{figure:a1=0} and \ref{figure:a1<0}, if $a_1(0) \le 0$ then $Q_1(x_{1}, 0)$ cannot be $0$ for all $x_1\ge0$.
    And for the case $a_1(0) \ge 0$,
    \begin{eqnarray*}
        Q_1(m_1, 0) & = & \alpha_2m_1^2m_2(-m+2m_1+m_2) \\
        Q_1(m_1, m_2) & = & -(m_1+m_2)(m-m_1-m_2) (-\alpha_1\alpha_2m_1m_2+(m_1+m_2)^2)
    \end{eqnarray*}
    are obviously less than $0$.
    Thus, $Q_1$ cannot be $0$ for any $x_1\in(0,m_1)$ and $x_2\in\{0, m_2\}$.
\end{proof}
By Lemma \ref{lem:case2+case13}, Case 2 + Case 1/3 cannot be Nash equilibria,
and symmetrically, Case 1/3 + Case 2 cannot either.

\subsection{Other Situations}
Finally, we briefly discuss the remaining situations: $(x_1,x_2)\in \{(0, m_2), (m_1, 0),\allowbreak (m_1, m_2)\}$.
It is not hard to see that neither of them can benefit by sending all her mining power to attack the other:
for case $(m_1, m_2)$, both two players get no reward;
for case $(0, m_2)$, player $2$ only gets
$$\frac{m_1}{m-m_2}\cdot\frac{m_2}{m_1+m_2}(1-\alpha_1) < \frac{m_2}{m};$$
and the case $(m_1, 0)$ is symmetric to $(0, m_2)$.

In conclusion, all the eight situations, except $(0, 0)$, are not Nash equilibria,
which finishes the proof of Theorem~\ref{thm:2pools:unique}.

\section{Conclusion and Future Directions}
In this work, we refine the game-theoretic model for pool block withholding attacks when the managers of the pools
individually deduct small fractions from the total rewards.
For most reasonable  deductions,
we show that no-pool-attacks is always a Nash equilibrium for any number of pools, and particularly,
when there are only two pools under consideration, it is the unique Nash equilibrium.

A direct open problem is to generalize our second result to more than two pools, that is, is it possible for an arbitrary number of managers to make small deductions
so that no-pool-attacks is also a unique Nash equilibrium?

There are many other future directions that deserve exploration.
For example, in all works about PBW games that we know, it is assumed that
every miner for a pool is always loyal to her pool manager.
This assumption does not hold when the miners have their own interests on how to report their solutions.
We hope that, by combining tools in cooperative game theory, our work can inspire further study in this direction.


%
%


\bibliographystyle{plain}
\bibliography{pbwa}
\end{document}